\documentclass{ws-acs}
\newtheorem{problem}{Problem}

\usepackage{graphicx}
\usepackage{amsmath,amssymb,amsfonts}
\usepackage{mathrsfs}
\usepackage{xcolor}
\usepackage{textcomp}
\usepackage{booktabs}
\usepackage{multirow}
\usepackage{url}

\usepackage{algorithm}
\usepackage{algorithmicx}
\usepackage{algpseudocode}

\usepackage{tikz}
\usetikzlibrary{decorations.pathreplacing,positioning,arrows.meta,shapes,patterns}
\usetikzlibrary{calc,fit}
\tikzstyle{vertex}=[draw,circle,fill=white,inner sep=2pt] %
\usepackage{pgfplots}
\usepackage{pgfplotstable}
\pgfplotsset{compat=1.18}

\usepackage[compress]{cite}
\usepackage[verbose]{hyperref}
\hypersetup{colorlinks=false,allbordercolors=blue,pdfborderstyle={/S/U/W 1}}

\newcommand{\orcid}[1]{\href{https://orcid.org/#1}{ORCID: #1}}

\newcommand{\TDom}{{\mathcal{T}}}

\newcommand{\SetCover}{{\sf Min Set Cover}}

\newcommand{\VertexSep}{{\sf d-VertexSep}}
\newcommand{\TimeSep}{{\sf d-MinIntSep}}

\begin{document}

\makeatletter
\def\@biblabel#1{[#1]}
\makeatother

\markboth{R. Dondi and M. M. Hosseinzadeh}{Testing Robustness of Temporal Transportation Networks via Interval Separators}


\title{Testing Robustness of Temporal Transportation Networks via Interval Separators\footnote{A preliminary version of this work appeared in the proceedings of COMPLEX NETWORKS 2025 \cite{DondiHosseinzadeh2025Interval}.}}

\author{Riccardo Dondi\footnote{Universit\`a degli Studi di Bergamo, Bergamo, Italy.}~\orcid{0000-0002-6124-2965}}
\address{Universit\`a degli Studi di Bergamo, Bergamo, Italy\\
\email{riccardo.dondi@unibg.it}}

\author{Mohammad Mehdi Hosseinzadeh\footnotemark[1]~\orcid{0000-0003-3275-6286}}
\address{Universit\`a degli Studi di Bergamo, Bergamo, Italy\\
\email{m.hosseinzadeh@unibg.it}}

\maketitle

\begin{abstract}
This paper addresses the problem of identifying time interval separators in temporal networks. We introduce \TimeSep{}, a new variant of the temporal separator problem, which models failures as time intervals assigned to vertices and aims to block all temporal paths between a source and a target that can be completed within a given deadline $d$. We prove that the \TimeSep{} problem is NP-hard and hard to approximate within a logarithmic function of the size of the vertex set, assuming 
P $\neq$ NP, and we propose an Integer Linear Programming (ILP) formulation to compute minimum interval separators. This latter method is evaluated on synthetic and real-world temporal networks derived from transportation datasets. The experimental results show that the running time is strongly influenced by the temporal dimension, the imposed deadline, and the density of temporal paths.
\end{abstract}

\keywords{Temporal graphs; Temporal separator problem; Integer Linear Programming; Computational complexity; Algorithms for Network Analysis.}

\section{Introduction}
\label{sec:intro}

Novel representations of entity interactions have been considered
in network science and graph literature, in order to take into account
their dynamics and heterogeneity. This has led to the definition
of novel graph models, a notable example being \emph{temporal networks}
\cite{holme2015modern,holme2019map,DBLP:journals/netmahib/HosseinzadehCGD23,DBLP:journals/im/Michail16}.

Temporal networks or temporal graphs
represent how interactions (or arcs) evolve in
a discrete time domain
for a given set of entities (or vertices)~\cite{holme2015modern,DBLP:journals/jcss/KempeKK02}.
The time domain consists of a sequence of timestamps associated with arcs.
A temporal graph can also be seen as a sequence of static graphs,
one for each timestamp, over the same vertex set,
while the arc sets can change from one timestamp to the other.
In particular, an arc observed in a timestamp is called a \emph{temporal arc}.


Research on temporal graphs has focused on studying connectivity and finding paths or walks. Early work introduced the basic models and reachability concepts
\cite{DBLP:journals/jcss/KempeKK02,DBLP:journals/tcs/FluschnikMNRZ20}. Subsequent research has investigated temporal connectivity from an algorithmic perspective, 
addressing both computational complexity and the development of efficient solution methods
\cite{DBLP:journals/jcss/AkridaMSR21,DBLP:conf/iwoca/BumpusM21,DBLP:journals/jcss/Erlebach0K21,DBLP:journals/jcss/ZschocheFMN20,DBLP:journals/algorithmica/MarinoS23,DBLP:journals/jcss/CostaLMS24}. 
In parallel, temporal connectivity has been explored in empirical analysis and real-world applications
\cite{DBLP:journals/pvldb/WuCHKLX14,DBLP:journals/tkde/WuCKHHW16,dondi2023colorful}. Moreover, other problems have been studied, for example dense subgraph discovery \cite{DBLP:journals/sncs/DondiH21,DBLP:journals/kais/RozenshteinBGST20} or temporal covering \cite{DBLP:conf/aaai/HammKMS22,DBLP:journals/datamine/RozenshteinTG21}.

In this paper, we study an approach proposed in \cite{DBLP:conf/isaac/HarutyunyanKP23}
that considers temporal graphs to test the robustness of a transportation system against failures of some elements.
A classical robustness measure is the size of a minimum vertex separator: a minimum set of vertices whose removal separates all paths between a source and a target.
In temporal graphs, time labels lead to temporal paths, i.e., paths that respect time: a temporal arc must be followed by a temporal arc with an increasing time label (strict model).
A temporal vertex separator is then defined as a minimal set of vertices whose removal separates all temporal paths from the source to the target \cite{DBLP:journals/jcss/ZschocheFMN20}.

The model in \cite{DBLP:conf/isaac/HarutyunyanKP23} introduces \VertexSep{}, where, given a temporal graph and a parameter $d$, one seeks a set of vertices whose deletion removes all temporal paths that take time at most $d$.
The problem is NP-hard \cite{DBLP:conf/isaac/HarutyunyanKP23}, and its parameterized and approximation complexity has been further analyzed \cite{DBLP:conf/wads/DondiL25,DBLP:conf/isaac/HarutyunyanKP23}.

Here, we follow this approach and further investigate separators depending on time, motivated by temporal failures of vertices.
Even limited-time failures may affect connectivity for a long time (e.g., a temporally failed train station).
We introduce the concept of separator timeline, inspired by \cite{DBLP:journals/jcss/DondiL25,DBLP:journals/datamine/RozenshteinTG21}.
A separator timeline is a set of time intervals associated with vertices so that temporal paths from the source to the target are deleted.
More precisely, we define \TimeSep{}, which asks for a minimum-length set of vertex intervals intersecting each temporal path between source and target that can be completed within deadline $d$ (see the example in Fig.~\ref{fig:example}).
We prove that \TimeSep{} is not only NP-hard, but also
hard to approximate within factor  $a \ln |V| $, for any constant $a$ with $0 < a < 1$ ($V$ is the vertex set),
assuming P $\neq $ NP (Section~\ref{sec:Complexity}).
Then in Section~\ref{sec:ILP} we design an ILP method, and we evaluate it experimentally (Section~\ref{sec:Experiments}).
We conclude in Section~\ref{sec:Conclusion} pointing out some future directions.

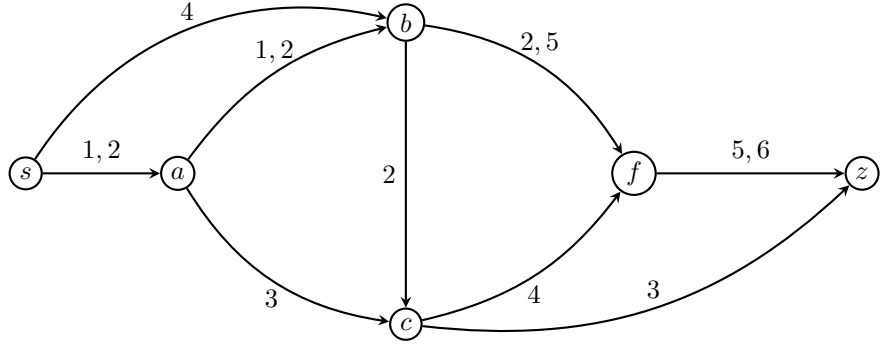
\begin{figure}
    \centering

  \begin{tikzpicture}[->,>=stealth,thick]
    \coordinate (a) at (0,0);
    \coordinate (b) at (3,2);
    \coordinate (c) at (3,-2);
    \coordinate (d) at (6,0);

    \node[vertex] (S) at (-2,0) {$s$};
    \node[vertex] (A) at (a) {$a$};
    \node[vertex] (B) at (b) {$b$};
    \node[vertex] (C) at (c) {$c$};
    \node[vertex] (D) at (d) {$f$};
    \node[vertex] (Z) at (9,0) {$z$};

\draw (S) to node[above] 
{$1 , 2$} (A);

\draw[bend left=35] (S) to node[above] 
{$4$} (B);

\draw[bend left=20] (A) to node[above] 
{$1,2$} (B);

\draw (B) to node[left] 
{$2$} (C);

\draw[bend right=25] (A) to node[below] 
{$3$} (C);

\draw[bend left=25] (B) to node[above] 
{$2, 5$} (D);

\draw[bend right=20] (C) to node[below] 
{$4$} (D);

\draw (D) to node[above] 
{$5, 6$} (Z);

\draw[bend right=25] (C) to node[above] 
{$3$} (Z);

  \end{tikzpicture}
    \caption{An example of a temporal graph $\mathcal{D}$. Consider the case that the deadline $d = 4$, then there exists two $s$-$z$ temporal paths in $\mathcal{D}$
    that are within the deadline:
    $s, (sb, 4), b, (bf,5), f, (fz, 6), z$
    and $s, (sa, 2), a, (ac,3), c, (cf, 4), f, (fz,5), z$.
    Note that other $s$-$z$ temporal paths have traveling time not within the deadline, for example,
    $s, (sa, 1), a, (ab,2), b , (bf,5), f, (fz, 6), z$
    has traveling time $6-2 + 1= 5 > d = 4$.
    An $s,z,d$-separator of $D$ is $I_f = [5,6]$; it has length $2$.
    }
    \label{fig:example}
\end{figure}

\section{Preliminaries}
\label{sec:prelim}

A temporal graph is defined over timestamps $1$ to $T$, denoted by $[1,\dots,T]$.
For an integer $n$, $[n]$ denotes $[1,\dots,n]$.
An interval $I=[i,j]$ with $1\le i\le j\le T$ has length $l(I)=j-i+1$.
An empty interval has length $0$.

\begin{definition}
\label{def:TempGraph}
A temporal directed graph $\mathcal{D}=(D=(V,A),\lambda)$ consists of (1) an underlying directed graph $D=(V,A)$, and (2) a function $\lambda$ that assigns a set of timestamps to each arc $a\in A$.
\end{definition}

The set of temporal arcs of $\mathcal{D}$ is
\[
A_T =\{ (uv,t): uv \in A \wedge t \in \lambda(uv)\}.
\]

Given a vertex $v\in V$, an interval associated with $v$ is $I_v=[l_v,r_v]$, $1\le l_v\le r_v\le T$.
A temporal path is a sequence of alternating vertices and temporal arcs such that (1) each vertex appears only once and (2) arcs are traversed time-consistently;
since we consider the strict model, for two consecutive arcs $(uv,t)$ and $(vw,t')$, we require $t<t'$.

Given a vertex $v$ and interval $I_v$, we say that $(v,I_v)$ \emph{separates} each temporal path that traverses an outgoing temporal arc $(vu,t)$ with $t\in I_v$.
An empty interval does not separate any temporal path.

Given two vertices $s,z\in V$, an $s,z$-\emph{separator timeline} $\mathcal{S}$ in $\mathcal{D}$ is a set of vertex intervals such that:
\begin{itemlist}
\item $\mathcal{S}$ contains one interval for each $v\in V$;
\item each temporal path from $s$ to $z$ is separated by some interval $I_v\in \mathcal{S}$.
\end{itemlist}
$\mathcal{S}$ may contain empty intervals; $s$ and $z$ have empty intervals in any separator timeline.
The length of $\mathcal{S}$ is
\[
l(\mathcal{S}) = \sum_{v \in V} l(I_v).
\]

Consider a temporal path
\[
P = (u_1,(u_1u_2,t_1),u_2,\dots,(u_{h-1}u_h,t_{h-1}),u_h).
\]
Its traveling time is $trt(P)=t_{h-1}-t_1+1$.
A deadline $d\in[1,T]$ bounds travel time: $trt(P)\le d$.

An $s,z,d$-\emph{separator timeline} $\mathcal{S}$ requires separation only for temporal paths from $s$ to $z$ with traveling time at most $d$ (see Fig.~\ref{fig:example}).

Now, we define the problem we are interested in.

\begin{problem}(\TimeSep{})\\
\textbf{Input:} A temporal directed graph $\mathcal{D}$, two vertices $s$ and $z$, and a deadline $d\in[1,T]$.\\
\textbf{Output:} An $s,z,d$-separator timeline of minimum length in $\mathcal{D}$.
\end{problem}

\section{Hardness}
\label{sec:Complexity}

In this section we prove NP-hardness and hardness of approximation of \TimeSep{} via a reduction from \SetCover{}.
We recall here the definition of \SetCover{}.

\begin{problem}(\SetCover{})\\
\textbf{Input:} A collection $\mathcal{C} = \{ C_1, \dots, C_m\}$ of sets over a universe set $U=\{ u_1, \dots, u_n\}$.\\
\textbf{Output:} A collection $\mathcal{C}' \subseteq \mathcal{C}$  of minimum cardinality such that for each $u_i \in U$, $i \in [n]$, there exists a set $C_j \in \mathcal{C}'$,
$j \in [m]$, with $u_i \in C_j$.
\end{problem}

Given an instance $(\mathcal{C},U)$ of \SetCover{}, we define in the following a corresponding instance $(\mathcal{D},s,z,d)$ of \TimeSep{}
(see the example of Fig.~\ref{fig:reduction}).
Note that we assume that
the sets $C_j$, $j \in [m]$ to which an element $u_i$, $i \in [n]$, belongs to are ordered, that
is if $u_i \in C_a, C_b$, $a,b \in [m]$,
then $C_a$ precedes $C_b$ if $a < b$.
Based on this order, two sets $C_a, C_b \in \mathcal{C}$ are \emph{consecutive} related to $u_i$
if $u_i \in C_a, C_b$ and 
there is no set $y$ with $u_i \in C_y$ and $a < y < b$.
The first (last, respectively) set  of $\mathcal{C}$ containing $u_i$, $i \in [n]$, is the set of $\mathcal{C}$ having minimum  index (maximum index, respectively) containing $u_i$.

Given $(\mathcal{C},U)$ we construct a temporal
graph $\mathcal{D} = (V, A_T)$ (see Fig. \ref{fig:reduction}
for an example of the temporal graph built by the reduction).
We start by the defining the vertex set $V$:
\[
V=\{s,z\}\cup\{v_j: C_j \in \mathcal{C}, j \in [m] \}.
\]

The set $A_T$ of temporal arcs is defined as follows:

\begin{align*}
A_T = & \{(sv_j,2j-1): j \in[m]\}\cup \{(v_j z,2j): j\in[m]\}\cup\\
& \{(s v_j,(mn)^3+2nj+ i-1): C_j \text{ is the first set containing $u_i$}, j \in [m], i\in[n] \}\cup\\ 
& \{(v_iv_y,(mn)^3+2nj+i): \text{ $C_j$ and $C_y$ are consecutive sets related to $u_i$}, i\in [n],\\ 
& j,y\in[m], j < y\} \cup\\
& \{(v_i z,(mn)^3+2nj+i): 
 \text{ $C_j$ is the last sets containing $u_i$}, 
j\in [m], i\in[n]\}.
\end{align*}

Finally, we define the deadline $d=n$.

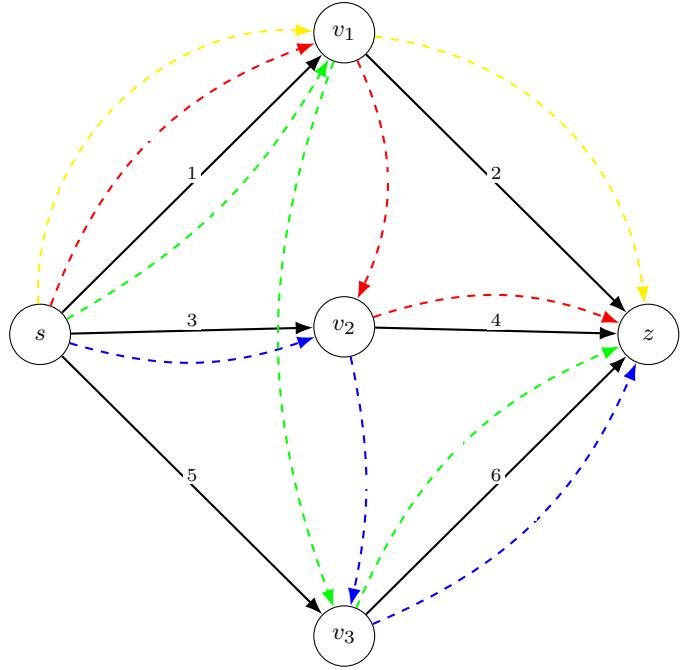
\begin{figure}
    \centering

\begin{tikzpicture}[
  vtx/.style={circle, draw, fill=white, minimum size=8mm, inner sep=0pt, font=\small},
  lab/.style={font=\scriptsize, fill=white, inner sep=1.2pt},
  e/.style={-Latex, thick},
  short/.style={e},
  long/.style={e, dashed},
  box/.style={draw, rounded corners, inner sep=5pt, font=\scriptsize}
]


\node[vtx] (s) at (0,0) {$s$};
\node[vtx] (z) at (8,0) {$z$};

\node[vtx] (v1) at (4, 4.0) {$v_1$};
\node[vtx] (v2) at (4, 0.1) {$v_2$};
\node[vtx] (v3) at (4,-4.0) {$v_3$};

\draw[short] (s) -- node[lab, above] {$1$} (v1);
\draw[short] (s) -- node[lab, above] {$3$} (v2);
\draw[short] (s) -- node[lab, above] {$5$} (v3);

\draw[short] (v1) -- node[lab, above] {$2$} (z);
\draw[short] (v2) -- node[lab, above] {$4$} (z);
\draw[short] (v3) -- node[lab, above] {$6$} (z);


\draw[long, bend left=25, red] (s) to node[lab, above] 
{ } (v1);
\draw[long, bend left=25,red] (v1) to node[lab, above] 
{ }
(v2);
\draw[long, bend left=20,red] (v2) to node[lab, below] 
{ }
(z);

\draw[long, bend right=18, blue] (s) to node[lab, below] 
{ }
(v2);
\draw[long, bend left=12, blue] (v2) to node[lab, below] 
{ }
(v3);
\draw[long, bend right=22, blue] (v3) to node[lab, above] 
{ }
(z);

\draw[long, bend left=48, yellow] (s) to node[lab, below] 
{ }
(v1);
\draw[long, bend left=38, yellow] (v1) to node[lab, below]
{ }
(z);

\draw[long, bend right=15, green] (s) to node[lab, below] 
{ }
(v1);
\draw[long, bend right=20, green] (v1) to node[lab, below] 
{ }
(v3);
\draw[long, bend left=22, green] (v3) to node[lab, above] 
{ }
(z);



\end{tikzpicture}

    \caption{An example of temporal graph built by the reduction, where $U = \{ u_1, u_2, u_3, u_4\}$,
    $C_1 = \{ u_1, u_3, u_4\}$, $C_2 = \{ u_1, u_2\}$
    and $C_3 = \{ u_2, u_4\}$,
    thus $u_1 \in C_1 \cup C_2$ (the temporal path encoding $u_1$
    is dashed red), $u_2 \in C_2 \cup C_3$
    (the temporal path encoding $u_2$ is dashed blue),
    $u_3 \in C_1$ (the temporal path encoding $u_3$ is dashed yellow)
    and $u_4 \in C_1 \cup C_3$
    (the temporal path encoding $u_4$ is dashed green).
    The timestamps of the dashed temporal arcs are not shown
    for readability.
    The temporal arcs on the dashed red path have timestamps
    $(mn)^3 + 8$, $(mn)^3 + 9$ and $(mn)^3 + 10$,
    the temporal arcs on the dashed blue path have timestamps
    $(mn)^3 + 17$, $(mn)^3 + 18$ and $(mn)^3 + 19$,
     the temporal arcs on the dashed yellow path have timestamps $(mn)^3 + 24$, $(mn)^3 + 25$,
    the temporal arcs on the green dashed path have timestamps
    $(mn)^3 + 32$, $(mn)^3 + 33$ and $(mn)^3 + 35$.
    Note that $d= n = 4$ and $(mn)^3 = (12)^3 = 1728$.
    The definition of timestamps implies that the temporal
    paths of traveling time at most $d$ are only those induced
    by temporal arcs having the same color.
    Note that each temporal path $(s, (s v_i,2i -  1), v_i,
    (v_i, z, 2i), z)$, $i \in [m]$, has to be separated
    by an interval $I_{v_i}$ that includes time $2i$.
    }
    \label{fig:reduction}
\end{figure}

We start by proving the following property of 
the temporal paths in $\mathcal{D}$.

\begin{lemma}
\label{lem:prel}
Let $(\mathcal{C},U)$ be an instance of \SetCover{}, and let $(\mathcal{D},s,z,d)$ be the corresponding instance of \TimeSep{}.
Each temporal path of traveling time at most $d$ in $\mathcal{D}$ is either:
\begin{romanlist}[(i)]
\item $(s,(sv_j,2j-1),v_j,(v_ j z,2j),z)$ with $j\in[m]$; or
\item a temporal path containing temporal arcs
defined in interval $[(mn)^3+2nj, (mn)^3+ 2nj + n]$,
with $j \in [m]$ 
\end{romanlist}
\end{lemma}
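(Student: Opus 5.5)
The plan is to split the temporal arcs of $\mathcal{D}$ into two groups and show that a short temporal path can use only one of them, and within the second group only a single block. The \emph{early} arcs are $(sv_j,2j-1)$ and $(v_jz,2j)$, with timestamps in $[1,2m]$. The \emph{late} arcs are all the remaining ones, with timestamps at least $(mn)^3+2n+0$ and at most $(mn)^3+2nm+n$. Since $(mn)^3 > 2m+n$, any path containing both an early and a late arc has traveling time larger than $(mn)^3-2m > n = d$. So a path with $trt(P)\le d$ uses only early arcs or only late arcs.

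\textbf{Early arcs only.} Every early arc either leaves $s$ or enters $z$. A temporal path starts at $s$, visits each vertex at most once, and stops on reaching $z$. Hence it consists of exactly one arc $(sv_j,2j-1)$ followed by one arc $(v_{j'}z,2j')$ with $j'=j$, since the arc out of $s$ ends at $v_j$. The strict-time condition $2j-1<2j$ holds. This gives case (i).

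\textbf{Late arcs only.} I view the late timestamps as lying in blocks $B_j=[(mn)^3+2nj,(mn)^3+2nj+n]$. The block index $j$ is the index appearing in the timestamp formula, and the offset within the block is $i-1$ or $i\le n$. Consecutive blocks are separated by gaps: the last value of $B_j$ and the first of $B_{j+1}$ differ by $n$, and more generally any two timestamps in distinct blocks differ by at least $n+1$, using $2n - n + 1$.

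Because timestamps along a temporal path strictly increase, the traveling time is the last timestamp minus the first plus one. If the path's arcs fall in two different blocks, this gives $trt \ge (n+1)+1 > d$. I expect the main obstacle to be checking this gap arithmetic carefully. The right-hand endpoint of $B_j$ is $(mn)^3+2nj+n$, while the smallest element of $B_{j+1}$ is $(mn)^3+2n(j+1)+0$. Their difference is $n$, so the traveling time would be at least $n+1 > d$. Once that is verified, all arcs lie in a single block $B_j$, which is case (ii).

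Finally, I will note that the statement only claims membership in a common interval, so nothing further about the structure of the path inside a block is needed.
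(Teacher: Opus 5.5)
Your proof is correct and is essentially the paper's argument. The paper anchors at the first arc out of $s$ and shows every later arc lies in $[t_1, t_1+n-1]$. In the early case this forces $(v_jz,2j)$; in the late case it keeps all arcs below $(mn)^3+2n(j+1)$, the start of the next block. You express the same ``gap between timestamp clusters exceeds the deadline'' fact symmetrically over pairs of arcs. The only slip is your intermediate claim that timestamps in distinct blocks differ by at least $n+1$: the minimum gap is exactly $n$, as you then recompute. The corrected bound $trt \ge n+1 > d$ is all the argument needs.
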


\begin{proof}
By construction, the first arc of any $s$--$z$ temporal path $p$ in $\mathcal{D}$ is $(sv_j,t)$ where $t=2j-1$ or $t=(mn)^3+2nj+i-1$, for some $j \in [m]$ and some $i \in [n]$.
If $t=2j-1$, then due to deadline $d=n$, $p = (s,(sv_j,2j-1),v_j,(v_j z,2j),z)$ since the other temporal arcs outgoing from $v_i$ are defined at timestamp greater than $(mn)^3 > 2j + d$ .

If $t=(mn)^3+2nj+i-1$, we claim that the temporal
arcs of $p$ are defined in interval $[(mn)^3+2nj, (mn)^3+ 2nj + n]$.
Since the first temporal arc of $p$ is $(s v_j,t)$ with 
$t=(mn)^3+2nj+i-1$, then each temporal arc in $p$
must be defined at time $t'$ with 
$t' > (mn)^3+2nj + i-1 \geq (mn)^3 + 2nj $ for time consistency.
On the other hand, 
$t' < (mn)^3 + 2nj + n + 1$.
Indeed, $t' < (mn)^3 + 2nj + i-1 + n$
since otherwise the traveling time of $p$ would be
larger than $d = n$.
The temporal arcs must then have time 
$t' \leq (mn)^3 + 2nj + n$, since by definition of $\mathcal{D}$ for a fixed $j \in [m]$, the temporal arcs have maximum time $(mn)^3 + 2nj + n$.
On the other hand, the temporal arcs defined for $y > j$
are defined at time at least
$(mn)^3 + 2n(j+1) = (mn)^3 + 2nj + 2n > 
(mn)^3 + 2nj + i-1 + n$.
\end{proof}

Now, we prove one of the main property of the reduction.

\begin{lemma}
\label{lem:hard1}
Let $(\mathcal{C},U)$ be an instance of \SetCover{}, and $(\mathcal{D},s,z,d)$ the corresponding instance of \TimeSep{}.
Given a solution $\mathcal{C}'$ of \SetCover{}, 
we can compute in polynomial time a solution of \TimeSep{} of length at most $((nm)^3+2(mn) + n+1)|\mathcal{C}'| + m - |\mathcal{C}'|$.
\end{lemma}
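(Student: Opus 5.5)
Given the cover $\mathcal{C}'$, we define the timeline $\mathcal{S}$ explicitly. For every $j$ with $C_j \in \mathcal{C}'$ we set $I_{v_j} = [2j, (mn)^3 + 2nm + n]$; its length is $(mn)^3 + 2nm + n - 2j + 1 \le (mn)^3 + 2mn + n + 1$. For every $j$ with $C_j \notin \mathcal{C}'$ we set $I_{v_j} = [2j, 2j]$, of length $1$. The intervals of $s$ and $z$ are empty. Summing gives the bound $((nm)^3+2(mn)+n+1)|\mathcal{C}'| + m - |\mathcal{C}'|$. The construction is clearly polynomial, because each interval is described by two numbers polynomial in $m,n$.

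To prove that $\mathcal{S}$ is an $s,z,d$-separator timeline, we invoke Lemma~\ref{lem:prel} and split into its two cases. In case (i), the path is $(s,(sv_j,2j-1),v_j,(v_jz,2j),z)$. It leaves $v_j$ via a temporal arc at time $2j \in I_{v_j}$, so it is separated regardless of whether $C_j \in \mathcal{C}'$.

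In case (ii), all arcs of the path lie in the window $[(mn)^3+2nj,(mn)^3+2nj+n]$ for some $j$. We check from the definition of $A_T$ that the arcs in that window are those built from one fixed "block". Those arcs starting at $s$ go to the first set containing some $u_i$; those between set-vertices go between consecutive sets related to $u_i$, each arc increasing the set index; and the last arc enters $z$ from the last set containing $u_i$. The key claim is that such a path must pass through some $v_y$ with $u_i \in C_y$ and leave it by an outgoing arc. Since $\mathcal{C}'$ is a cover, we want that path to exit a vertex $v_y$ with $C_y \in \mathcal{C}'$. The exit time of that arc lies in $[2y, (mn)^3+2nm+n]$ and hence in $I_{v_y}$, so the path is separated.

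The main obstacle is precisely this step: showing that the paths of type (ii) are exactly the "element paths" $s \to v_{j_1} \to \dots \to v_{j_k} \to z$ through \emph{all} sets containing $u_i$ (as in the colored paths of Fig.~\ref{fig:reduction}). Otherwise a path could skip the chosen covering set. If the paper's indexing of the arc timestamps is loose, a path could mix arcs of different elements. I would first try to argue directly from the timestamps. Arcs belonging to element $u_i$ carry time offsets that form a strictly increasing chain within the block, and the only arc out of $v_y$ in the relevant window continues the chain for $u_i$. If this exact identification is delicate, the fallback is cruder: since each interval $I_{v_y}$ for $C_y\in\mathcal{C}'$ covers all times from $2y$ up to beyond every timestamp in the graph, I only need that any type-(ii) path visits at least one $v_y$ with $C_y \in \mathcal{C}'$ before reaching $z$. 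I would then obtain this from the consecutive-set structure, which forces any $s$--$z$ walk along $u_i$-arcs to traverse the full ordered chain of sets containing $u_i$.
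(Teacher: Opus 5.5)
\textbf{Verdict: genuine gap, shared with the paper's own proof.} Your timeline, length count and treatment of type~(i) paths are exactly the paper's. The step you single out is the one the paper settles in a single clause (``hence also the temporal arc outgoing from $v_j$ in $p$''). Your proposal leaves that step open, and neither of your sketches can close it for $A_T$ as written (reading the tails $v_i$ of the consecutive-set and last-set arcs as $v_j$).

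Both sketches assume that the window $[(mn)^3+2nj,(mn)^3+2nj+n]$ of Lemma~\ref{lem:prel} contains the arcs of a single element, forming one chain. In fact the window is indexed by the \emph{set} $C_j$. It contains an arc $(sv_j,(mn)^3+2nj+i-1)$ for every $u_i$ whose first set is $C_j$. It also contains all long arcs leaving $v_j$, one at time $(mn)^3+2nj+i$ for each $u_i\in C_j$. So the chain of $u_i$ is spread over the windows of its sets, and once $u_i$ lies in two sets the chain's traveling time is at least $2n+2>d$.

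Meanwhile a deadline-respecting path can enter $v_j$ on one element's arc and leave on another's. Take $n=m=3$, $C_1=\{u_1,u_3\}$, $C_2=\{u_2,u_3\}$, $C_3=\{u_1,u_2\}$, and the cover $\mathcal{C}'=\{C_1,C_3\}$. The path $s,(sv_2,(mn)^3+13),v_2,(v_2z,(mn)^3+15),z$ uses the first-set arc of $u_2$ and the last-set arc of $u_3$. Its traveling time is $3=d$, and it leaves only $v_2$, where $I_{v_2}=[4,4]$. Hence your timeline (the paper's) is not an $s,z,d$-separator, and no sharper argument rescues this step. Your fallback fails for the same reason: the path need not run along the arcs of a single $u_i$.

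This is a defect of the reduction, not only of your write-up, so the construction itself has to be repaired. One repair indexes the long arcs by element, as the timestamps in Fig.~\ref{fig:reduction} suggest:
\begin{itemize}
\item give the chain $C_{j_1},\dots,C_{j_k}$ of $u_i$ the times $B_i+j_1-1,\,B_i+j_1,\dots,B_i+j_k$;
\item space consecutive $B_i$ more than $m+d$ apart;
\item take $d\ge m+1$, so that a whole chain fits within the deadline (which Lemma~\ref{lem:hard2} needs).
\end{itemize}
With this construction your first argument works:
\begin{itemize}
\item arcs of different elements can no longer lie on one deadline-respecting path;
\item inside the window of $u_i$, $s$ and every $v_y$ have at most one outgoing arc, so the path is forced to be the full chain of $u_i$;
\item the chain leaves the covering vertex $v_y$ at a time in $I_{v_y}$, which runs from $2y$ to the largest timestamp.
\end{itemize}
The constants in the length bound change accordingly.
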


\begin{proof}
Given a solution $\mathcal{C}'$ of \SetCover{} on instance $(\mathcal{C}, U)$, define a solution $\mathcal{S}$ of \TimeSep{} on instance $(\mathcal{D},s,z,d)$ as follows:
\begin{romanlist}[(i)]
\item For each $C_j\in \mathcal{C}'$, set $I_{v_j}=[2j,(nm)^3+2(mn)+n]$.
\item For each $C_j\notin \mathcal{C}'$, set $I_{v_j}=[2j]$.
\end{romanlist}
Each interval defined at point (i) have length 
$(nm)^3+2(mn) + n -2j +1 < (nm)^3+2(mn) + n +1$; each interval defined at point (ii), has length $1$.
The overall length of $\mathcal{S}$ is then at most
$|\mathcal{C}'| ((nm)^3+2(mn) + n+1) + 
m - |\mathcal{C}'|$.

Now, we show that $\mathcal{S}$ is indeed a s,z,d-separator timeline, that is it intersects every temporal path of travel time at most $d$.
All temporal paths of type (i) in Lemma~\ref{lem:prel} are intersected by $\mathcal{S}$, since $2j\in I_{v_j}$, for each $j \in [m]$.
Consider a temporal path $p$ of type (ii) in Lemma~\ref{lem:prel},
defined in interval $[(mn)^3+2nj, (mn)^3+ 2nj + n]$.
Since $\mathcal{C}'$ is a set cover, 
for each $u_i \in U$, $i \in [n]$, there exists $C_j \in \mathcal{C}'$, with $j \in [m]$,
such that $u_i \in C_j$. By construction it follows that $I_{v_j} = [2j,(nm)^3+2(mn)+n]$
and each temporal arc outgoing  from $v_j$ has timestamps in $I_{v_j}$,
hence also the temporal arc outgoing from $v_j$ in $p$, thus concluding the proof.
\end{proof}

Now, we prove the second main property of the reduction.

\begin{lemma}
\label{lem:hard2}
Let $(\mathcal{C},U)$ be an instance of \SetCover{}, and $(\mathcal{D},s,z,d)$ the corresponding instance of \TimeSep{}.
Given a solution of \TimeSep{} of length 
at least $((nm)^3) h$ and
at most $((nm)^3+2(mn) + n)h + m - h$,
for some $h \geq 1$, we can compute 
in polynomial time a solution of \SetCover{} consisting
of at most $h$ sets.
\end{lemma}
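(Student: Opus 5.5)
The plan is to read off the set cover from the vertices carrying \emph{long} intervals in $\mathcal{S}$. First I will normalize $\mathcal{S}$. By Lemma~\ref{lem:prel}(i), each path $(s,(sv_j,2j-1),v_j,(v_jz,2j),z)$ can only be separated by $I_s$ or $I_{v_j}$. Since $I_s$ is empty, every $I_{v_j}$ must contain $2j$, so every $I_{v_j}$ is nonempty. Now call $v_j$ \emph{long} if $I_{v_j}$ contains some timestamp $\ge (mn)^3$. Because $I_{v_j}$ is an interval containing $2j<(mn)^3$, a long $v_j$ has $l(I_{v_j})\ge (mn)^3-2j+1$. Let $k$ be the number of long vertices and set $\mathcal{C}'=\{C_j : v_j \text{ long}\}$; this set is computed in polynomial time.

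Second, I will show that $\mathcal{C}'$ is a set cover. Fix $u_i$ and consider the path encoding it: it starts with $(sv_{j_1},\cdot)$, where $C_{j_1}$ is the first set containing $u_i$, and then visits the vertices $v_j$ with $u_i\in C_j$ in increasing order of $j$, ending at $z$. All its arcs lie in a window of width at most $n$ above $(mn)^3$. I still have to verify this path exists with consecutive strictly increasing labels, following the intended construction as in Fig.~\ref{fig:reduction}, where labels increase by one per arc. Its traveling time is then at most $n=d$, so it must be separated. Since $s$ and $z$ have empty intervals, the separation happens at some $v_j$ with $u_i\in C_j$. That interval contains a timestamp $\ge (mn)^3$, so $v_j$ is long and $u_i$ is covered by $\mathcal{C}'$.

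Third, I will bound $k\le h$, which I expect to be the delicate step. The upper bound gives $l(\mathcal{S})\le ((nm)^3+2mn+n)h+m-h$. On the other hand, the $k$ long intervals contribute at least $k((mn)^3-2m+1)$ and the remaining $m-k$ intervals contribute at least $1$ each. Suppose $k\ge h+1$. Then we would need $(h+1)((mn)^3-2m+1)\le ((mn)^3+2mn+n)h+m$, i.e., $(mn)^3\le h(2mn+n+2m-1)+m+2m-1$. Using $h\le m$ (otherwise take all of $\mathcal{C}$, which is trivially a cover of size $m\le h$), the right side is $O(m^2n)$, which is less than $(mn)^3$ for nontrivial $m,n\ge 2$ (small cases are checked directly). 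This contradiction gives $k\le h$. The lower-bound hypothesis $(nm)^3h$ is not needed for this direction; it only says the solution contains at least about $h$ long intervals.

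Finally, if some $u_i$ belongs to no set, the instance has no cover and we may assume it is excluded. Under that assumption, $\mathcal{C}'$ is a cover with at most $h$ sets, which proves the lemma.
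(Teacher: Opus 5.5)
Your skeleton matches the paper's proof: put $C_j$ into $\mathcal{C}'$ when $I_{v_j}$ reaches the high-label region, show $\mathcal{C}'$ covers $U$ via the path encoding each element, and bound $|\mathcal{C}'|$ by counting length. Two of your local choices improve on the paper.

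\begin{itemize}
\item Your threshold (``$I_{v_j}$ contains a timestamp $\ge (mn)^3$'') makes the covering step immediate. The paper's threshold, $l(I_{v_j})\ge (nm)^3$, needs an extra claim: a shorter interval through $2j$ cannot reach the element arcs leaving $v_j$. Whether that holds depends on exactly where those arcs sit.
\item Your counting keeps track of $h$. With the paper's threshold, the contradiction needs $(mn)^3>h(2mn+n-1)+m$. The paper's stated inequality $(mn)^3>2mn+n+m$ does not give this by itself once $h\ge 2$, and your reduction to $h\le m$ supplies the missing piece.
\end{itemize}

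The step you left open, however, is a genuine gap. It is also exactly the step the paper only asserts: that for each $u_i$ some $s$--$z$ path through precisely the vertices $v_j$ with $u_i\in C_j$ has traveling time at most $d=n$. Your proposed justification does not hold under either reading of the construction.

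\begin{itemize}
\item With the arcs as literally defined, the arc leaving $v_j$ gets a label in $[(mn)^3+2nj,(mn)^3+2nj+n]$, the window of $C_j$. So once $u_i$ lies in two sets, the path spans two windows and has traveling time at least $2n+2>d$.
\item With the encoding of Fig.~\ref{fig:reduction}, labels do not grow by one per arc. The arc from $s$ into $v_j$ gets $(mn)^3+2ni+j-1$ and the arc leaving $v_j$ gets $(mn)^3+2ni+j$; the green path has offsets $32,33,35$. So the traveling time is $j_{\mathrm{last}}-j_{\mathrm{first}}+2$, which can be as large as $m+1$.
\item Even with unit steps, the traveling time would be one plus the number of sets containing $u_i$.
\end{itemize}

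In every case the traveling time is not bounded by $n$ in general. Such a path then need not be separated, and the covering argument collapses.

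This has to be repaired in the reduction, not in the proof. One option is deadline $m+1$ with element windows spaced more than $2m$ apart. Then each window carries exactly one deadline-respecting path, and no such path mixes windows. After that repair, your argument, including your threshold and your counting, goes through unchanged.
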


\begin{proof}
Starting from $\mathcal{S}$ of \TimeSep{} of length 
at least $((nm)^3) h$ and at most $((nm)^3+2(mn) + n)h + m - h$, for some $h \geq 1$,
we define a collection $\mathcal{C}'$ of sets in $\mathcal{C}$
as follows:

\begin{itemize}
    \item For each $I_{v_j}$, $j \in [m]$, of length at least $(nm)^3$ in $\mathcal{S}$, add $C_j$ to $\mathcal{C}'$.
\end{itemize}

We claim that $|\mathcal{C}'| \leq h$ and that
$\mathcal{C}'$ is a set cover.

First, assume that $|\mathcal{C}'| > h$, then
there exists at least $h+1$ vertices
with interval of length at least $(mn)^3$ in $\mathcal{S}$.
Then $\mathcal{S}$ has length at least
\[(h+1) (mn)^3 > h((mn)^3+2(mn) + n) + m - h,
\]
since $(mn)^3 > 2(mn)+n + m$ for $m,n \geq 2$.
Hence $|\mathcal{C}'| \leq h$.

Now, we prove that $\mathcal{C}'$ is a set cover.
Assume this is not the case and that
an element $u_i \in U$, $i \in [n]$, is not covered by $\mathcal{C}$. 
Then for each set $C_j$, $j \in [m]$, such that $u_i \in C_j$,
$|I_{v_j}| < (mn)^3$.
Furthermore, each $I_{v_j}$, for a vertex $v_j$ encoding $C_j$, must include time $2j$ in order to separate the temporal path 
$(s,(sv_j,2j-1),v_j,(v_j z,2j),z)$ with $j\in[m]$ (see Lemma \ref{lem:prel}). Since the length of $I_{v_j}$ 
is smaller than $(mn)^3$, then $I_{v_j}$
does not include
any temporal arc outgoing from $v_j$ that is defined at time at least $(mn)^3 +2nj$, for each $j \geq 1$.
Consider the $s$--$z$ temporal path $p$ in $\mathcal{D}$ that traverses the vertices with temporal arcs defined in interval
$[(mn)^3 + 2nj, (mn)^3 +2nj + n]$
and note that $p$ traverses vertices $v_j$, such that $u_i \in C_j$ and has traveling time at most $d$.
Since each $I_{v_j}$ does not include
any temporal arc outgoing from $v_j$ defined at time at least $(mn)^3 +2nj$, $p$
is not separated by $\mathcal{S}$, a contradiction.
Thus $\mathcal{C}'$ is a set cover.

We conclude the proof observing that
any solution of \TimeSep{} has length 
at least $((nm)^3)$ (hence $h \geq 1$), since
$\mathcal{D}$ contains at least one temporal
path that have arcs in interval $[(mn)^3 + 2nj, (mn)^3 +2nj + n]$.
\end{proof}

We can prove now the main result of this section.

\begin{theorem}
\label{teo:teo1}
\TimeSep{} is hard to approximate within factor $a \ln |V|$,
for any constant $a$ with $0 <a < 1$.
\end{theorem}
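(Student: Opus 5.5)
We prove the statement by an approximation-preserving reduction from \SetCover{}, using the construction above together with Lemma~\ref{lem:hard1} and Lemma~\ref{lem:hard2}. The known result we invoke is that \SetCover{} cannot be approximated within factor $c \ln n$ for any constant $0<c<1$ unless P $=$ NP (Dinur--Steurer). We may assume $m \le \mathrm{poly}(n)$, which holds for the hard instances. Since $|V| = m+2$, we have $\ln |V| = \Theta(\ln m)$, which is $O(\ln n)$ under this assumption. The constants therefore need care: $a$ must be traded against $c$ through the ratio $\ln|V|/\ln n$.

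Assume, for contradiction, that \TimeSep{} admits a polynomial-time $\rho$-approximation with $\rho = a\ln|V|$. Given an instance $(\mathcal{C},U)$ with optimum $OPT_{SC}$, we build $(\mathcal{D},s,z,d)$ in polynomial time. Note that timestamps are $O((mn)^3)$, so they have polynomial magnitude.

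By Lemma~\ref{lem:hard1}, $OPT_{TS} \le ((nm)^3+2mn+n+1)\,OPT_{SC} + m$. Running the approximation yields a solution $\mathcal{S}$ with
\[
l(\mathcal{S}) \le \rho\,\bigl((nm)^3+2mn+n+1\bigr)OPT_{SC} + \rho m .
\]
Next, choose $h$ as the number of intervals of length at least $(mn)^3$ in $\mathcal{S}$; this is exactly the set cover that Lemma~\ref{lem:hard2} extracts. The key is to take the direct argument from the proof of Lemma~\ref{lem:hard2}, which shows that these $h$ sets form a cover, rather than checking the lemma's exact length window. It then remains to bound $h$. We have $h\,(mn)^3 \le l(\mathcal{S})$, hence
\[
h \le \rho\,OPT_{SC}\,\bigl(1 + O(1/(mn)^2)\bigr) + \rho m/(mn)^3 .
\]
The last term is below $1$ for reasonable $\rho$, so $h \le \rho(1+o(1))\,OPT_{SC}$.

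The main obstacle is converting $\rho = a\ln|V| = a\ln(m+2)$ into a factor of the form $c\ln n$ with $c<1$. We first restrict to the Dinur--Steurer hard instances, where $m$ is polynomially related to $n$. If $m \le n^{1+o(1)}$, then $a\ln(m+2)(1+o(1)) \le c\ln n$ for some $c<1$. Otherwise, we pad the universe with dummy elements that are each covered by one new set, which balances $n$ against $m$; equivalently, we use the form of the hardness stated in terms of the instance size. This yields a $c\ln n$-approximation for \SetCover{}, which contradicts P $\neq$ NP. NP-hardness follows as a byproduct.
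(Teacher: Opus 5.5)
Your reduction analysis follows the paper's route: Lemma~\ref{lem:hard1} bounds the optimum, the intervals of length at least $(mn)^3$ give the cover, and you then appeal to \SetCover{} hardness with $m\le\mathrm{poly}(n)$. Your version is in fact sharper. You obtain $h\le\rho(1+o(1))\,OPT_{SC}$, whereas the paper bounds the denominator by $2|\mathcal{C}'_O|(mn)^3$; read literally, that only reaches constants below $1/2$. Using the covering argument of Lemma~\ref{lem:hard2} directly, instead of its length window, is also legitimate, since every feasible timeline has $2j\in I_{v_j}$. The gap is in the last step, where $a\ln|V|=a\ln(m+2)$ has to become $c\ln n$ with $c<1$.

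\textbf{The padding you propose does not work.} A dummy element covered only by its own new set forces that set into every cover, so the padded optimum is $OPT_{SC}+k$. To balance $m$ against $n$ you need $k$ nearly as large as $m$, which is far larger than $OPT_{SC}\le n$ when $m\gg n$. The padded instances are then trivially $(1+o(1))$-approximable, and approximating them says nothing about the original instance, so no contradiction results. More generally, enlarging the universe this way cannot raise the threshold: the $c\ln n$ bound refers to the original universe, and the padded instances are no harder to approximate than the originals.

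Nor is ``$m$ polynomially related to $n$'' enough. If the hard instances had $m=n^K$ with $K>1$, then for $a>1/K$ an $a\ln|V|$-approximation of \TimeSep{} would only give an $aK\ln n$ factor for \SetCover{}. Greedy already achieves $\ln n+1$, so no contradiction is available, and you only get the theorem for $a<1/K$.

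\textbf{What closes the step is an external fact about the hard instances.} One option is that they satisfy $\ln m\le(1+o(1))\ln n$. I believe this holds for the Feige-type constructions, where the universe is a union of copies of a partition system taken much larger than the label-cover instance whose vertex--label pairs index the sets, but you should check it against the source you cite. The other option is a version of the hardness stated with respect to a size parameter $N\ge m$, which gives $a\ln|V|\le a(1+o(1))\ln N$ at once. Your ``equivalently'' clause points at this second option, but it is not equivalent to padding; it must be cited as a theorem, not derived.

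The paper's own proof passes over the same point. It goes from $|\mathcal{C}|\le\mathrm{poly}(|U|)$ to $a\ln|V|\le b\ln|U|$ for every $a<1$ without controlling the degree of the polynomial.
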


\begin{proof}
We show that Lemma~\ref{lem:hard1} and Lemma~\ref{lem:hard2} define an approximation preserving reduction from \SetCover{} to \TimeSep{}.

Let $(\mathcal{C},U)$ be an instance of \SetCover{}, and let $(\mathcal{D},s,z,d)$ be the corresponding instance of \TimeSep{}.
Consider an approximated (optimal, respectively) solution $\mathcal{S}_A$ ($\mathcal{S}_O$, respectively) of \TimeSep{}{} on $\mathcal{D}$.
The approximation factor of \TimeSep{} is equal to: $\frac{l(\mathcal{S}_A)}{l(\mathcal{S}_O)}$.

By Lemma \ref{lem:hard2}, starting from $\mathcal{S}_A$ we can compute in polynomial time an approximated solution $\mathcal{C}'_A$ of \SetCover{} on instance $(\mathcal{S}, U)$, such that 
\[
\frac{l(\mathcal{S}_A)}{l(\mathcal{S}_O)} 
\geq \frac{|\mathcal{C}'_A|(mn)^3}{l(\mathcal{S}_O)}.
\]
By Lemma \ref{lem:hard1}, we have that, 
given an optimal solution $\mathcal{C}'_O$ of \SetCover{} on $(\mathcal{S},U)$, it holds that $l(\mathcal{S}_O) \leq |\mathcal{C}'_O|((mn)^3 + 2mn+n) + m -|\mathcal{C}'_O|.$
Thus we obtain that
\[
\frac{l(\mathcal{S}_A)}{l(\mathcal{S}_O)} \geq 
\frac{|\mathcal{C}'_A|(mn)^3}
{l(\mathcal{S}_O)}
\geq 
\frac{|\mathcal{C}'_A|(mn)^3}
{|\mathcal{C}'_O|((mn)^3 + 2mn+n) + m -|\mathcal{C}'_O|}
\geq
\frac{|\mathcal{C}'_A|(mn)^3}
{2|\mathcal{C}'_O|(mn)^3}.
\]
where the last inequality holds since 
$(mn)^3 >  2mn+n+m $ for $m, n \geq 2$.

Since \SetCover{} is not approximable within factor $c \ln |U|$, for any constant $c$ such that $0 < c < 1$, unless $P = NP$~\cite{DBLP:journals/talg/AlonMS06,DBLP:conf/stoc/DinurS14}, then for any constant $b$, with $0 <b<1$, unless $P = NP$, it follows that
 \[
\frac{l(\mathcal{S}_A)}{l(\mathcal{S}_O)} > b \ln |U| \qquad \text{for any constant $0<b<1$}.
 \]
The inapproximability of \SetCover{} within factor $c \ln |U|$, holds also when $|\mathcal{C}| \leq poly(|U|)$ \cite{DBLP:journals/eccc/Nelson07}, thus $|V| = m +2 \leq poly(|U|)$
and $a \ln |V| \leq b \ln |U|$ for a constant $a$,
$0< a <1$,
thus
 \[
\frac{l(\mathcal{S}_A)}{l(\mathcal{S}_O)} > a \ln |V|
\qquad \text{for any constant $0< a <1$},
 \]
 thus concluding the proof.
\end{proof}

Observe that Theorem \ref{teo:teo1} implies also
the NP-hardness of \TimeSep{}.
\section{Method}
\label{sec:ILP}

We approach \TimeSep{} using an Integer Linear Programming (ILP) method.
We aim to find an $s,z,d$-separator timeline $\mathcal{S}$ of minimum length, i.e., vertex intervals $\{I_v\}_{v\in V}$ that separate all temporal paths from $s$ to $z$ with traveling time at most $d$.

\subsection{ILP-Based Separator Computation}

We introduce binary variables $x_{v,t}\in\{0,1\}$ for each $v\in V$ and timestamp $t\in[1,T]$:
\begin{itemlist}
\item $x_{v,t}=1$ iff timestamp $t$ is included in the interval $I_v$;
\item $x_{v,t}=0$ otherwise.
\end{itemlist}

\subsubsection{Objective Function}
\[
\min \sum_{v\in V}\sum_{t=1}^{T} x_{v,t}.
\]

\subsubsection{Path Separation Constraints}
Let $\mathcal{P}_{s,z,d}$ be the set of temporal paths from $s$ to $z$ with traveling time at most $d$.
For each $P_k\in\mathcal{P}_{s,z,d}$ written as
\begin{equation}
P_k=\big(s,(su_1,t_1),u_1,(u_1u_2,t_2),\dots,(u_hz,t_{h+1}),z\big),
\end{equation}
we impose:
\[
\sum_{(vu,t)\ \text{in}\ P_k,\ v\neq s,z} x_{v,t}\ \ge\ 1.
\]
Since enumerating all temporal paths is hard, we generate constraints iteratively by repeatedly finding deadline-respecting temporal paths and adding the corresponding constraint.

\subsubsection{Contiguity Constraints}
To enforce that each $I_v$ is a contiguous interval (no fragmentation), for each vertex $v$ and timestamps $t_1<t_2$:
\[
x_{v,t_1}+x_{v,t_2}-1 \le x_{v,t_1+1}.
\]
This ensures that if two timestamps are selected, intermediate timestamps must also be selected.

\section{Experimental Evaluation}
\label{sec:Experiments}

We now present an experimental evaluation of \TimeSep{} on both synthetic and real networks.
Experiments were run on a MacBook Air with an Apple M1 processor and 8 GB of RAM, using Python 3.13. In our empirical evaluation, we consider a running time of up to one hour as reasonable time for solving an instance.

\subsection{Synthetic Dataset}
We conducted experiments using synthetic temporal graphs derived from static transportation networks\footnote{Transportation Networks for Research Core Team. Transportation Networks for Research. \url{https://github.com/bstabler/TransportationNetworks}. Accessed July 24, 2025.}.
We transformed each static network into a temporal graph via two phases.
We fixed a source $s$ and target $z$ in each network: $s$ is the vertex with maximum out degree, and $z$ the vertex with maximum in degree.

In phase 1, we extracted a shortest path between $s$ and $z$, then removed the internal vertices of this path (except $s$ and $z$) together with all incident arcs, repeating this procedure until $s$ and $z$ became separated.
Moreover, in order to guarantee the existence of several temporal $s$–$z$ paths within the deadline, each arc on an extracted shortest path was assigned multiple timestamps, with 
$4$ - $8$ randomly selected timestamps assigned
to each arc. 
For Berlin-Friedrichshain, this interval was reduced to 4–6, and for ChicagoSketch it was fixed to 2, as larger values resulted in instances that could not be solved within a reasonable time.
The deadline was determined using the number of arcs in the first discovered $s$–$z$ path, multiplied by three; if this value exceeded the maximum timestamp it was capped accordingly, and if it fell below half of the maximum timestamp it was raised to 50\% of the maximum. Notice that, after assigning timestamps to the extracted $s$–$z$ paths, some of these paths may be infeasible as their temporal format can exceed the deadline.

In phase 2, we assigned temporal labels to the remaining static arcs.
Each such arc was given between 2 and 5 (randomly selected) activation timestamps, sampled uniformly from \(\TDom\), resulting in a sparse temporal background in which each arc carries only a small number of temporal labels.



We solved the ILP using Gurobi\footnote{\url{https://www.gurobi.com}}.
Table \ref{tab:network_generation} summarizes the generated temporal graphs. The datasets span a wide range of sizes, from the small Eastern–Massachusetts network to large graphs such as Barcelona, ChicagoSketch, and Munich, which all instances have the same timestamp horizon ($TS = 50$). Moreover, the number and length of extracted shortest $s$-$z$ paths vary considerably across datasets, some instances yield only a few paths with different lengths, while others, such as Barcelona, produce many short alternatives. Deadlines are largely uniform ($d=25$), as they are derived from the first extracted shortest path and bounded by the global horizon, with ChicagoSketch standing out due to its inherently longer routes and a larger deadline. 
 
Additionally, Table \ref{tab:ilp_synthetic_results} reports ILP performance. Separator sizes vary widely across datasets, from small values in Berlin–Prenzlauerberg to larger ones in Barcelona, while this variability, separators consistently involve small set of vertices. Moreover, the average interval lengths remain modest, which shows that effective separation typically relies on blocking short temporal windows rather than long time spans. Solver running time is dominated by the number of feasible temporal paths rather than by network size alone, instances such as ChicagoSketch and Berlin–Friedrichshain, which exhibit millions of temporal paths in Table \ref{tab:network_generation}, lead to the highest running times, while smaller or less temporally dense networks are solved much faster.

\begin{table}[h!]
\small
\caption{Summary of the original static networks and generated temporal networks. 
Vertices and Arcs denote the number of vertices and arcs in the static network; TA is the number of temporal arcs in the generated temporal network; TS is the number of timestamps; $d$ is the deadline; and ``\#shortest paths: \#arcs'' reports, for each instance, the number of extracted shortest $s$--$z$ paths and the corresponding number of arcs in each such path.}
\centering
\begin{tabular}{l|c|c|c|c|c|c}
\toprule
\textbf{Dataset} & Vertices & Arcs & TA & TS & $d$ & $\#$shortest paths: $\#$arcs \\
\colrule
Anaheim & 416 & 914 & 2996 & 50 & 25 & 4 paths: 2, 6, 7, 13 \\
Barcelona & 930 & 2522 & 8549 & 50 & 25 & 9 paths: 2,2,2,2,2,3,4,4,6 \\
Berlin--Friedrichshain & 224 & 523 & 1673 & 50 & 25 & 5 paths: 2,2,3,7,13 \\
Berlin--Prenzlauerberg & 352 & 749 & 2508 & 50 & 25 & 2 paths: 3, 15 \\
ChicagoSketch & 933 & 2950 & 7456 & 50 & 45 & 6 paths: 15,18,19,20,20,26 \\
Eastern--Massachusetts & 74 & 258 & 648 & 50 & 25 & 5 paths: 3,4,5,5,7 \\
Munich & 742 & 1872 & 5964 & 50 & 25 & 4 paths: 3,5,14,29 \\
\botrule
\end{tabular}
\label{tab:network_generation}
\end{table}

\begin{table}[t]
\small
\caption{ILP-based temporal separator results on synthetic transportation networks. 
SL denotes the separator length (number of blocked time--vertex pairs), $|V_{\mathrm{sep}}|$ the number of vertices with non-empty blocking intervals, Avg.~Int.\ the average length of blocking intervals per separator vertex, $\#P$ the number of temporal $s$--$z$ paths within the deadline, and Time the solver running time in seconds.}
\centering
\begin{tabular}{l|c|c|c|c|c}
\toprule
\textbf{Dataset}
& SL
& $|V_{\mathrm{sep}}|$
& Avg.~Int.
& $\#P$
& Time (s) \\
\colrule
Anaheim & 19 & 4 & 4.75 & 996425 & 160 \\
Barcelona & 48 & 9 & 5.33 & 26079 & 25 \\
Berlin--Friedrichshain & 24 & 5 & 4.80 & 4844671 & 1035 \\
Berlin--Prenzlauerberg & 6 & 2 & 3.00 & 1067720 & 207 \\
ChicagoSketch & 10 & 5 & 2.00 & 4489216 & 2082 \\
Eastern--Massachusetts & 27 & 5 & 5.40 & 627543 & 39 \\
Munich & 13 & 3 & 4.33 & 1529491 & 347 \\
\botrule
\end{tabular}
\label{tab:ilp_synthetic_results}
\end{table}


\subsection{Real Dataset}

We used real-world dataset of public transportation networks derived from GTFS (General Transit Feed Specification). GTFS is a standard format used by public transport agencies to publish timetable-based information. We used selected GTFS-derived temporal networks from an existing multi-city public transportation collection \cite{kujala2018collection}\footnote{\url{https://zenodo.org/records/1186215}}. From the original day-long GTFS data, we restricted each temporal network to the first two hours of time in order to reduce the number of timestamps and keep the instances computationally manageable by our method. 

For each network, we initially selected a small number of source $s$ and target $z$ as vertices with high out-degree in the first 10\% of the time window and high in-degree in the last 50\%, respectively, and considered all $s$--$z$ combinations. If no feasible temporal path exists for a given choice, then, the candidate sets are gradually expanded by including additional vertices according to the their degree based ranking until feasible $s$--$z$ paths are found.
Then, $s$--$z$ paths are connected using earliest-arrival temporal paths. Moreover, the deadline for each instance is set to twice the corresponding smallest traveling time. For each city, among the resulting instances, we reported the solution with the largest separator objective value.

Tables~\ref{tab:real_networks} and \ref{tab:ilp_real_results} summarize the experimental results on selected real-world public transportation networks constructed from the first two hours of the original day-long data. Table~\ref{tab:real_networks} reports the corresponding network and temporal characteristics, while Table~\ref{tab:ilp_real_results} presents the ILP-based temporal separator outcomes. Recall that the objective of the ILP is to block all temporal $s$-$z$ paths within the given deadline using a minimum-size temporal separator. As a result, the ILP runtime is influenced by the number of timestamps, the number of temporal $s$–$z$ paths that must be blocked within the deadline, and the size of the resulting separator (i.e., the objective value), rather than by network size alone. 
For example, Berlin, which has the largest network and a relatively large separator, requires the longest runtime, even though the number of feasible temporal paths within the deadline is moderate. 
In contrast, Luxembourg exhibits a much smaller separator but it contains a large number of temporal paths, leading to a substantial number of blocking constraints while remaining solvable within the considered time limits.

\begin{table}[h!]
\small
\caption{Network and temporal characteristics of the real-world transportation networks used in the experiments. 
Vertices denotes the number of vertices, TA the number of temporal arcs, $T$ the number of timestamps, and $d$ the deadline.}
\centering
\begin{tabular}{l|c|c|c|c}
\toprule
\textbf{Dataset}
& Vertices
& TA
& TS
& d \\
\colrule
Berlin        & 3397 & 24369 & 121 & 10 \\
Grenoble     & 892  & 3292  & 117 & 6  \\
Helsinki     & 1292 & 2092  & 121 & 56 \\
Luxembourg   & 1247 & 7904  & 120 & 50 \\
Venice       & 457  & 1111  & 118 & 118 \\
\botrule
\end{tabular}
\label{tab:real_networks}
\end{table}

\begin{table}[h!]
\small
\caption{ILP-based temporal separator results on real-world transportation networks. 
SL denotes the separator length (number of blocked time--vertex pairs), $|V_{\mathrm{sep}}|$ the number of vertices with non-empty blocking intervals, Avg.~Int.\ the average length of blocking intervals per separator vertex, $\#P$ the number of temporal $s$--$z$ paths within the deadline, and Time the solver running time in seconds.}
\centering
\begin{tabular}{l|c|c|c|c|c}
\toprule
\textbf{Dataset}
& SL
& $|V_{\mathrm{sep}}|$
& Avg.~Int.
& $\#P$
& Time (s) \\
\colrule
Berlin        & 60  & 8 & 7.5 & 46     & 879 \\
Grenoble     & 106 & 2 & 53.0 & 71     & 134 \\
Helsinki     & 3   & 3 & 1.0 & 22     & 214 \\
Luxembourg   & 3   & 3 & 1.0 & 2705   & 205 \\
Venice       & 1   & 1 & 1.0 & 16     & 71 \\
\botrule
\end{tabular}
\label{tab:ilp_real_results}
\end{table}

Table~\ref{tab:ilp_real_results} further characterizes the structure of the computed temporal separators on real-world transportation networks. In addition to the separator length, 
the table reports the number of vertices with non-empty blocking intervals, \( |V_{\mathrm{sep}}| \), and the average length of these intervals, 
which together describe how blocking is distributed across vertices. 
In particular, some networks achieve large separators by concentrating blocking on a very small number of vertices over longer contiguous time intervals, as observed in Grenoble, where only two vertices are involved but with very long (on average) blocking intervals. 
In contrast, other networks distribute the separator across a larger set of vertices with shorter blocking intervals, as in Berlin, indicating a more spatially spread but temporally localized intervention strategy. Smaller networks such as Helsinki, 
Luxembourg, and Venice have very compact separators, both in terms of separator length and number of involved vertices, often consisting of single-time interventions on only one or a few vertices.

Overall, these results show that real transportation networks give rise to markedly different temporal separator structures. The running time is influenced not only by network size, but also by the number of temporal \(s\)--\(z\) paths within the deadline.

\section{Conclusion}
\label{sec:Conclusion}

We studied time separation in temporal graphs and introduced \TimeSep{}, which models failures as time intervals on vertices to disconnect $s$ from $z$ within a deadline $d$.
We proved hardness of approximation (and NP-hardness) for \TimeSep{}. Furthermore, we proposed an ILP-based method to compute minimum interval separators of a temporal graph.

Experiments on both synthetic and real temporal transportation networks show that few vertices active over short time intervals are often sufficient to disrupt timely connections, even in highly connected settings. However, runtime grows significantly with larger timestamp ranges, larger deadlines, and higher temporal path density. In particular, the increasing number of feasible temporal paths observed in both synthetic and real networks emerges as the dominant factor affecting computational performance, rather than graph size alone.

Future works include both theoretical and experimental
directions. For the theoretical directions, it would be
interesting to analyze the parameterized complexity of the 
problem, for example fo parameters $T$ and length of an
$s,z,d$-separator.
It would be interesting to study the complexity in the nonstrict model,

\section*{Acknowledgments}

\section*{ORCID}
Riccardo Dondi - \url{https://orcid.org/0000-0002-6124-2965}\\
Mohammad Mehdi Hosseinzadeh - \url{https://orcid.org/0000-0003-3275-6286}

\bibliographystyle{ws-acs}
\bibliography{refs}

\end{document}